\title{Estimating Weighted Matchings in $o(n)$ Space }
\author{
Elena Grigorescu\thanks{Department of Computer Science, Purdue University, West Lafayette, IN. 
Email: {\tt elena-g@purdue.edu}.}
\and
Morteza Monemizadeh\thanks{Rutgers University, Piscataway, NJ 08854, USA. Supported by NSF CCF 1535878, IIS 1447793 and CCF 1161151. Email: {\tt mortezam@dimacs.rutgers.edu}.}
\and
Samson Zhou\thanks{Department of Computer Science, Purdue University, West Lafayette, IN. 
Email: {\tt samsonzhou@gmail.com}.}
}
\newenvironment{proof}{\noindent{\bf Proof : \ }}{\hfill$\Box$\par\medskip}
\newtheorem{theorem}{Theorem}
\newtheorem{corollary}[theorem]{Corollary}
\newtheorem{lemma}[theorem]{Lemma}
\newtheorem{definition}[theorem]{Definition}
\newtheorem{observation}[theorem]{Observation}
\newenvironment{proofof}[1]{\begin{trivlist} \item {\bf Proof 
#1:~~}}
  {\qed\end{trivlist}}
\renewenvironment{proofof}[1]{\par\medskip\noindent{\bf Proof of #1: \ }}{\hfill$\Box$\par\medskip}
\newcommand{\COMMENTED}[1]{{}}
\newcommand{\REAL}{\ensuremath{\mathbb{R}}}
\newcommand{\poly}{{\mathrm{poly}}}
\newcommand{\eps}{\epsilon}
\newcommand{\etal}{\textit{et al.}}
\begin{document}
\maketitle
\begin{abstract}

We consider the problem of estimating the weight of a maximum weighted matching of a weighted graph $G(V,E)$ whose edges are revealed in a streaming fashion. 
Extending the framework from Crouch and Stubbs (APPROX 2014), we develop a reduction from the maximum weighted matching problem to the maximum cardinality matching problem that only doubles the approximation factor 
of a streaming algorithm developed for the maximum cardinality matching problem. 
Our results hold for  the insertion-only and the dynamic (i.e, insertion and deletion) edge-arrival streaming models. 
The previous best-known reduction is due to Bury and Schwiegelshohn (ESA 2015)  who develop an algorithm whose approximation guarantee scales by a polynomial factor.

As an application, we obtain improved estimators for weighted planar graphs and, more generally, for weighted  bounded-arboricity graphs, by feeding into our reduction   the recent estimators due to Esfandiari \etal\  (SODA 2015) and to Chitnis \etal\  (SODA 2016).  In particular, we obtain a $(48+\eps)$-approximation estimator for the weight of a maximum weighted matching in planar graphs.

\end{abstract}
\section{Introduction}

We study the problem of estimating the weight of a maximum weighted matching in a weighted graph $G(V, E)$ whose edges arrive in a streaming fashion.   
Computing a maximum cardinality  matching (MCM) in an unweighted graph and a maximum weighted matching (MWM) of a weighted graph are fundamental problems in computational graph theory (e.g., \cite{MV80}, \cite{G90}).

Recently, the MCM and MWM problems have attracted a lot of attention in modern big data models such as streaming (e.g., \cite{FKMSZ05,McGregor:09,McGregor:05,EsfandiariHLMO15,Ahn:Guha:McGregor:12,GM12, AnandBGS12, GuptaP13, AssadiKLY16}), online  (e.g., \cite{BirnbaumM08, KorulaMZ13, BosekLSZ15}), MapReduce (e.g., \cite{LattanziMSV11}) and sublinear-time (e.g., \cite{BGS11,NS13}) models. 

Formally, the {\em Maximum Weighted Matching} problem is defined as follows.  

\begin{definition}[Maximum Weighted Matching (MWM)]
Let $G(V,E)$ be an undirected weighted graph with edge weights $w:E \rightarrow \REAL^+$. 
A matching $M$ in $G$ is a set of pairwise non-adjacent edges; that is, no two edges share a common vertex. 
A matching $M$ is called a maximum weighted matching of graph $G$ if its weight $w({M})=\sum_{\text{edge } e \in{M}} w(e)$ is maximum. 
\end{definition}

\noindent
If the graph $G$ is unweighted (i..e, $w:E \rightarrow \{1\}$ ), 
the maximum weighted matching problem becomes the \emph{Maximum Cardinality Matching} (MCM) problem.

In streaming models, the input graph is massive and the algorithm can only use a small amount of working space to solve a computational task.  In particular, the algorithm cannot store the entire graph $G=(V, E)$ in memory, but can only operate with a sublinear amount of space, preferably $o(n)$, where $|V|=n$. However, many tasks are not solvable in this amount of space,  and  in order to deal with such problems, the semi-streaming model \cite{FKMSZ05, Muthukrishnan05} was proposed, which allows $\mathcal{O}(n\,\text{polylog}(n))$ amount of working space. Both these settings have been studied in the adversarial model, where the edge order may be worst-case, and in the random order model, where the order of the edges is a uniformly random permutation of the set of edges.

For matching problems, if the goal is to output a set of edges that approximates the optimum matching, algorithms that maintain only $\tilde{\mathcal{O}}(n)$ edges cannot achieve better than $(e/e-1)$-approximation ratio (\cite{GoelKK12}, \cite{Kapralov13}). Showing upper bounds has drawn a lot of recent interest  (e.g., \cite{FKMSZ05}, \cite{KMM12}, \cite{McGregor:05}, \cite{Z08}, \cite{ELMS11}),  including a recent result \cite{GrigorescuMZ16} showing a $3.5$-approximation, which improves upon the previous $4$-approximation of \cite{CS14}.

If, on the other hand, the goal is to output only an estimate of the size of the matching, and not a matching itself, algorithms that use only $o(n)$ space are both desirable and possible. Surprisingly, very little is known about MWM/MCM in this model.  Recent work by Kapralov \etal\  \cite{KapralovKS14} shows the first $\text{polylog}(n)$ approximate estimator using only $\text{polylog}(n)$ space for the MCM problem. Further, if $\tilde{\mathcal{O}}(n^{2/3})$ space is allowed, then constant factor approximation algorithms are possible \cite{EsfandiariHLMO15}.

In a recent work,  Bury and Schwiegelshohn \cite{BuryS15}  consider the MWM problem in $o(n)$ space, showing a reduction to the  MCM problem, that scales the approximation factor polynomially. In particular, they are the first to show a constant factor estimator for weighted graphs  with bounded arboricity. Their results hold in the adversarial insertion-only model (where the updates are only edge insertion), and in the dynamic models (where the updates are both edge insertion and deletion). 
They also provide an $\Omega(n^{1-\eps})$ space  lower bound  to estimate the matching within $1+\mathcal{O}(\eps)$. Our results significantly improve the current best-known upper bounds of \cite{BuryS15}, as detailed in the next section.

\section{Our Contribution}

We extend the framework of \cite{CS14} to show a reduction from MWM to MCM that preserves the approximation within a factor of $2(1+\eps)$. 
Specifically,  given a $\lambda$-approximation estimation for the size of a maximum cardinality matching,  
the reduction provides a $(2(1+\epsilon)\cdot \lambda)$-approximation estimation of the weight of a maximum weighted matching.  Our algorithm works both in the insertion-only streaming model, and in the dynamic setting. In both these models the edges appear in adversarial order.
 
We next state our main theorem. As it is typical for sublinear space algorithms, we assume that  the edge-weights of $G=(V, E)$ are bounded by  $\poly(n)$.
  
\begin{center}
\shadowbox{
\parbox{0.82\columnwidth} {
\begin{theorem}\label{thm:main}
Suppose there exists a streaming algorithm (in insertion-only, or dynamic streaming model) 
that estimates the size of a maximum cardinality matching of an unweighted graph within a factor of $\lambda$, 
with probability at least $(1-\delta)$, using $S(n,\delta)$ space. Then, for every $\epsilon>0$, 
there exists a streaming algorithm that estimates the weight of a maximum weighted matching of a weighted graph 
within a factor of $2\lambda(1+\epsilon)$, with probability at least $(1-\delta)$, using $\mathcal{O}\left(S\left(n,\frac{\delta}{c\log n}\right) \log n\right)$ space.
\end{theorem}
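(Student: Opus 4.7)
The plan is to follow the Crouch--Stubbs bucketing framework: partition the edges of $G$ into $O(\log n/\epsilon)$ weight classes, run the hypothesized MCM estimator in parallel on each class (actually on each super-level set), and combine the resulting cardinality estimates into a single weighted sum that approximates the weight of a maximum weighted matching.

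More concretely, since weights lie in $[1,\poly(n)]$, I would set $L = \lceil \log_{1+\epsilon}(\poly(n)) \rceil = O(\log n /\epsilon)$ and, for each $i = 0, 1, \dots, L$, define $G_i$ to be the subgraph of $G$ induced by the edges whose weight is at least $(1+\epsilon)^i$. Each insertion or deletion in the input stream can be replayed as an insertion or deletion in every $G_i$ that the edge belongs to, so a single pass over the input stream simulates a stream for each $G_i$. Invoking the hypothesized MCM estimator in parallel on each $G_i$ with failure probability $\delta/(cL)$, a union bound ensures that with probability $\geq 1-\delta$ we simultaneously obtain values $\hat{\mu}_i$ that are within a factor of $\lambda$ of $\mu_i^{*} := |\text{MCM}(G_i)|$ for every $i$. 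The total space is $O(L\cdot S(n,\delta/(cL))) = O(S(n,\delta/(c\log n))\log n)$, matching the stated bound.

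The estimator outputs a weighted telescoping sum of the form $\widehat{W} = \epsilon\sum_{i=0}^{L}(1+\epsilon)^{i}\hat{\mu}_i$. The analysis then reduces to proving the two-sided inequality
\[
w(\text{MWM}(G)) \;\leq\; \epsilon\sum_{i=0}^{L}(1+\epsilon)^{i}\mu_i^{*} \;\leq\; 2(1+\epsilon)\cdot w(\text{MWM}(G)),
\]
so that replacing each $\mu_i^{*}$ by its $\lambda$-approximation $\hat{\mu}_i$ yields the claimed $2\lambda(1+\epsilon)$-factor. The lower bound is essentially a geometric-series calculation: any edge $e$ of $\text{MWM}(G)$ with weight $w_e$ contributes to $\text{MWM}\cap G_i$ for every $i$ with $(1+\epsilon)^{i}\leq w_e$, and since $\text{MWM}\cap G_i$ is itself a matching in $G_i$ of size at most $\mu_i^{*}$, summing $\epsilon(1+\epsilon)^{i}$ across the relevant levels geometrically recovers $w_e$ up to a $(1+\epsilon)$ factor.

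The main obstacle is the upper bound $\epsilon\sum_i(1+\epsilon)^{i}\mu_i^{*}\leq 2(1+\epsilon)\,w(\text{MWM}(G))$, because a naive term-by-term comparison gives a $\Theta(L)$ blow-up: each $\mu_i^{*}$ could individually be as large as $w(\text{MWM}(G))/(1+\epsilon)^{i}$. To avoid this, I would fix a maximum cardinality matching $M_i$ of each $G_i$ and view the collection $\{\epsilon\cdot \chi(M_i)\}_i$ as a fractional edge set, then amortize its weight against the MWM via a vertex-charging argument. Each vertex $v$ is saturated by at most one edge of each $M_i$, and by taking the symmetric difference of $\bigcup_i M_i$ with a maximum weighted matching $M^{*}$ and decomposing it into alternating paths and cycles, one can charge the mass $\epsilon(1+\epsilon)^{i}$ of each edge of $M_i$ to an incident edge of $M^{*}$ whose weight is comparable (using that edges of $M_i$ have weight at least $(1+\epsilon)^{i}$). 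The factor of $2$ arises because every edge of $M^{*}$ has two endpoints and may absorb charge from $M_i$-edges at both. Combining this structural bound with the union-bound choice of failure probability and the per-bucket $\lambda$-approximation guarantee completes the proof.
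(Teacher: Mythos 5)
Your reduction diverges from the paper's at the decisive point: what is done with the per-level estimates. The paper does \emph{not} output the raw weighted sum $\epsilon\sum_i(1+\epsilon)^i\hat{\mu}_i$; its algorithm runs a greedy truncation across levels, setting $\Delta_i=\max(0,\widehat{M_i}-2B_{i+1})$, $B_i=B_{i+1}+\Delta_i$, and outputting $A_0=\sum_i(1+\epsilon)^i\Delta_i$. That truncation is what caps the over-counting at a factor of $2$: it forces $B_i\le\widehat{M_i}\le 2B_i$, so the upper bound follows from $|M^*\cap S_j|\le|U^*_j|\le\lambda\widehat{M_j}\le 2\lambda\sum_{i\ge j}\Delta_i$ together with summation by parts, and the lower bound holds because the counts $\Delta_i$ are realizable by an actual matching (an already-chosen edge conflicts with at most two edges of the next level's matching). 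Your estimator skips the truncation, so its correctness rests entirely on the structural inequality $\epsilon\sum_i(1+\epsilon)^i\mu_i^*\le 2(1+\epsilon)\,w(M^*)$, and this is exactly the step your sketch does not establish. Two concrete problems: (i) $\bigcup_i M_i$ is not a matching, so its symmetric difference with $M^*$ does not decompose into alternating paths and cycles; that decomposition is available only for two matchings. (ii) The charge ``to an incident edge of $M^*$ whose weight is comparable'' is not available: for $e\in M_i\setminus M^*$, optimality of $M^*$ only guarantees that the \emph{total} weight of the at most two incident $M^*$-edges is $\ge w(e)$, so each of them may have weight as small as $w(e)/2$. Redoing the charging with this in mind gives, per level $t=(1+\epsilon)^i$, only $\mu_i^*\le|M^*\cap G_{\ge t}|+2|M^*\cap G_{\ge t/2}|$, and summing over levels yields a constant near $5$ or $6$, not $2$; whether the constant-$2$ bound even holds for general graphs is unclear (a K\"{o}nig-type weighted-cover argument gives it for bipartite graphs, but that duality fails in general), and the paper never needs to prove any such statement.

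There is also a smaller but genuine bug in the normalization of $\widehat{W}$. An edge of weight $w_e$ lies in levels $0,\dots,k$ with $k=\lfloor\log_{1+\epsilon}w_e\rfloor$, and $\epsilon\sum_{i=0}^{k}(1+\epsilon)^i=(1+\epsilon)^{k+1}-1$, which does not recover $w_e$ up to $(1+\epsilon)$ when $w_e$ is small: if every edge has weight $1$, your estimator is about $\epsilon\mu_0^*$ while $w(M^*)=\mu_0^*$, so the claimed lower bound $w(M^*)\le\epsilon\sum_i(1+\epsilon)^i\mu_i^*$ fails by a factor of $1/\epsilon$. The standard fix is to leave the base term unscaled, i.e., output $\hat{\mu}_0+\epsilon\sum_{i\ge 1}(1+\epsilon)^{i-1}\hat{\mu}_i$. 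But that repairs only the lower bound; the unproven factor-$2$ upper bound for the untruncated sum remains the essential gap, and it is precisely what the paper's $\Delta_i$/$B_i$ bookkeeping is designed to avoid.
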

}}
\end{center}

\noindent
We remark that if the estimator for MCM is specific to a monotone graph property 
(a property of graphs that is closed under edge removal), then our algorithm can use it as a subroutine 
to obtain an estimator for MWM in the weighted versions of the graphs with such properties 
(instead of using a subroutine for general graphs, which may require more space, or provide worse approximation guarantees).  
 
Our result improves the result of \cite{BuryS15}, who show a reduction from MWM to MCM that achieves 
a $\mathcal{O}(\lambda^4)$-approximation estimator for MWM, given a $\lambda$-approximation estimator for MCM.
Their reduction also allows extending MCM estimators to MWM estimators  in monotone graph properties.  

In particular, using specialized estimators for graphs of bounded arboricity, we obtain improved approximation 
guarantees compared with the previous best results of \cite{BuryS15}, as explained in Section \ref{sec:applications}, e.g., Table \ref{table:results}. In addition, our algorithm is natural and allows for a clean analysis.

\subsection{Applications} \label{sec:applications}
 Theorem \ref{thm:main} has immediate consequences  for computing MWM in graphs with bounded arboricity. A graph $G=(V,E)$ has arboricity $\nu$ if
\[\nu=\max_{U\subseteq V}\left\lceil\frac{|E(U)|}{|U|-1}\right\rceil,\]
where $E(U)$ is the subset of edges with both endpoints in $U$. The class of graphs with 
bounded arboricity includes several important families of graphs, such as planar graphs, or more generally, graphs with bounded degree, 
genus, or treewidth. Note that these families of graphs are monotone.

\paragraph{Graphs with Bounded Arboricity in the Insert-only Model}Esfandiari \etal\  \cite{EsfandiariHLMO15} provide 
an estimator for the size of a maximum cardinality matching of an unweighted graph in the insertion-only streaming model 
(for completeness we state their result as Theorem \ref{thm:arboricity:MCM:insertion}  in the Preliminaries). 
Theorem \ref{thm:main}, together with Theorem \ref{thm:arboricity:MCM:insertion} (due to \cite{EsfandiariHLMO15}) implies the following result.

\begin{theorem}\label{thm:arboricity}
Let $G$ be a weighted graph with arboricity $\nu$ and $n=\omega(\nu^2)$ vertices. 
Let $\epsilon,\delta\in(0,1)$. 
Then, there exists an algorithm that estimates the weight of a MWM in $G$ within a $2 \lambda$-approximation factor, 
where $\lambda=(5\nu+9)(1+\epsilon)$,  in the insertion-only streaming model, 
with probability at least $(1-\delta)$, using $\tilde{\mathcal{O}}(\nu\epsilon^{-2}\log(\delta^{-1})n^{2/3})$\footnote{$\tilde{\mathcal{O}}(f)=\tilde{\mathcal{O}}(f\cdot (\log n)^c)$ for a large enough constant $c$.} space.
Both the update time and final processing time are $\mathcal{O}(\log(\delta^{-1})\log n)$.
\end{theorem}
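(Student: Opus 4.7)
The plan is to apply Theorem~\ref{thm:main} with the MCM estimator of Esfandiari~\etal\ (Theorem~\ref{thm:arboricity:MCM:insertion}) as the subroutine. First I would verify that having arboricity at most $\nu$ is a monotone graph property: removing an edge from $G$ cannot increase any ratio $|E(U)|/(|U|-1)$, and hence cannot increase $\nu$. By the remark immediately following Theorem~\ref{thm:main}, this monotonicity is exactly what lets us plug the specialized bounded-arboricity MCM estimator into the reduction in place of a general-purpose one, so that each subgraph constructed internally by the reduction is handled with the sharper $(5\nu+9)(1+\epsilon')$ guarantee rather than a worst-case bound.

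Next I would set $\lambda = (5\nu+9)(1+\epsilon')$ and invoke Theorem~\ref{thm:main} with this $\lambda$ and an approximation parameter $\epsilon''>0$. The resulting MWM estimator achieves approximation factor $2\lambda(1+\epsilon'') = 2(5\nu+9)(1+\epsilon')(1+\epsilon'')$. Choosing, for example, $\epsilon' = \epsilon'' = \epsilon/3$ ensures $(1+\epsilon')(1+\epsilon'') \le 1+\epsilon$, which gives the advertised $2\lambda$-approximation with $\lambda = (5\nu+9)(1+\epsilon)$ after rescaling $\epsilon$.

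For the resource bounds, Theorem~\ref{thm:arboricity:MCM:insertion} provides $S(n,\delta) = \tilde{\mathcal{O}}(\nu\epsilon^{-2}\log(\delta^{-1})n^{2/3})$ space per instance, while Theorem~\ref{thm:main} runs $\mathcal{O}(\log n)$ copies of the subroutine at amplified failure probability $\delta/(c\log n)$. The total space is therefore $\mathcal{O}\bigl(S(n,\delta/(c\log n))\log n\bigr) = \tilde{\mathcal{O}}(\nu\epsilon^{-2}\log(\delta^{-1})n^{2/3})$, where the additional $\log n$ factors from both the union bound and the parallel copies are absorbed into the $\tilde{\mathcal{O}}$ notation. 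The stated update time and final processing time $\mathcal{O}(\log(\delta^{-1})\log n)$ follow by summing the per-copy update or aggregation cost across the $\mathcal{O}(\log n)$ copies.

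The only subtlety worth flagging is ensuring that each of the $\mathcal{O}(\log n)$ subgraphs on which Theorem~\ref{thm:main} invokes the MCM subroutine indeed has arboricity at most $\nu$, so that the same value of $\nu$ drives the approximation and space guarantees uniformly across all copies. This is precisely the monotonicity observation of the first paragraph, so it is essentially free; the rest of the proof is a direct substitution of parameters into Theorem~\ref{thm:main}.
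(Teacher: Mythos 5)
Your proposal is correct and follows exactly the paper's (implicit) argument: the paper simply states that Theorem~\ref{thm:main} combined with Theorem~\ref{thm:arboricity:MCM:insertion} yields the result, relying on the monotonicity remark for bounded-arboricity graphs just as you do. In fact, your write-up is slightly more careful than the paper, since you explicitly rescale $\epsilon$ to absorb the $(1+\epsilon')(1+\epsilon'')$ product into a single $(1+\epsilon)$ factor, a step the paper glosses over when it claims the $2(5\nu+9)(1+\epsilon)$ bound.
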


\noindent
In particular, for planar graphs, $\nu=3$ and by choosing $\delta=n^{-1}$  in Theorem \ref{thm:arboricity}, 
and $\epsilon$ as a small constant, the output of our algorithm is within $(48+\epsilon)$-approximation factor 
of a MWM, with probability at least $1-\frac{1}{n}$, using $\tilde{\mathcal{O}}(n^{2/3})$ space. 
The previous result of \cite{BuryS15} gave an approximation factor of $>3\cdot 10^6$ for planar graphs. 
\vskip 0.2in\noindent
Table \ref{table:results} summarizes the state of the art for MWM.
\begin{table}[htb]
\label{table:results}
\begin{center}
\begin{tabular}{c|c|c}
& Approximation for Planar Graphs & Approximation for Graphs with Arboricity $\nu$ \\\hline
\cite{BuryS15} & $>3\cdot 10^6$ & $12(5\nu+9)^4$ \\\hline
Here & $48+\epsilon$ & $2(5\nu+9)+\epsilon$ \\\hline
\end{tabular}
\vskip 0.1in\noindent
Table \ref{table:results}: The insertion-only streaming model requires $\tilde{\mathcal{O}}(\nu\epsilon^{-2}\log(\delta^{-1})n^{2/3})$ space for all graph classes, 
while the dynamic streaming model requires $\tilde{\mathcal{O}}(\nu\epsilon^{-2}\log(\delta^{-1})n^{4/5})$ space for all graph classes.
\end{center}
\end{table}

\paragraph{Graphs with Bounded Arboricity in the Dynamic Model}
Our results also apply to the dynamic model. Here we make use of the recent result of Chitnis \etal\  \cite{ChitnisCEHMMV16} that provides an estimator for MCM in the dynamic model (See Theorem \ref{thm:arboricity:MCM:dynamic} in the Preliminaries).

Again, Theorem \ref{thm:arboricity:MCM:dynamic} satisfies the conditions of Theorem \ref{thm:main} with $\lambda=(5\nu+9)(1+\epsilon)$, and consequently, we have the following application. 
\begin{theorem}\label{thm:arboricity:dynamic}
Let $G$ be a weighted graph with arboricity $\nu$ and $n=\omega(\nu^2)$ vertices. Let $\epsilon,\delta\in(0,1)$. 
Then, there exists an algorithm that estimates the weight of a maximum weighted matching in $G$ within a $2(5\nu+9)(1+\epsilon)$-factor 
in the dynamic streaming model with probability at least $(1-\delta)$, using $\tilde{\mathcal{O}}(\nu\epsilon^{-2}\log(\delta^{-1})n^{4/5})$ space.
Both the update time and final processing time are $\mathcal{O}(\log(\delta^{-1})\log n)$.
\end{theorem}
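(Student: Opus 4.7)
The plan is to instantiate the generic reduction of Theorem~\ref{thm:main} with the dynamic-model bounded-arboricity MCM estimator from Theorem~\ref{thm:arboricity:MCM:dynamic}. First I would invoke the remark following Theorem~\ref{thm:main}: the reduction partitions the stream by routing each edge to subgraphs $G_i$ consisting of edges whose weight exceeds a geometrically growing threshold, and runs an MCM subroutine on each such $G_i$. Since arboricity is monotone under edge removal, each $G_i$ inherits arboricity at most $\nu$ from $G$, so the Chitnis \etal\ estimator applies to every sub-instance rather than forcing us to fall back on a generic, space-heavier MCM estimator. Dynamic updates to $G$ translate into dynamic updates on the $G_i$'s, so the dynamic-model semantics are preserved throughout.

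Next I would combine the two approximation guarantees. Theorem~\ref{thm:arboricity:MCM:dynamic} yields $\lambda=(5\nu+9)(1+\eps_1)$ for any error parameter $\eps_1$, while Theorem~\ref{thm:main} inflates this to a $2\lambda(1+\eps_2)$-approximation for MWM. Taking $\eps_1=\eps_2=\eps/3$ gives the bound $2(1+\eps/3)^2(5\nu+9)\leq 2(5\nu+9)(1+\eps)$ for $\eps\in(0,1)$, which after a harmless rescaling of $\eps$ matches the target approximation factor in the theorem statement.

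Finally I would track the space and time. Theorem~\ref{thm:arboricity:MCM:dynamic} uses $\tilde{\mathcal{O}}(\nu\eps^{-2}\log(\delta'^{-1})n^{4/5})$ space per instance, and Theorem~\ref{thm:main} runs $\mathcal{O}(\log n)$ parallel copies each with failure probability $\delta'=\delta/(c\log n)$; a union bound delivers the required $(1-\delta)$ success probability, and the aggregate space is $\tilde{\mathcal{O}}(\nu\eps^{-2}\log(\delta^{-1})n^{4/5})$ after absorbing the extra logarithmic factors into $\tilde{\mathcal{O}}(\cdot)$. For running time, each incoming edge affects at most $\mathcal{O}(\log n)$ weight-class subroutines and each individual sketch update runs in $\mathcal{O}(\log(\delta^{-1}))$ time, yielding the claimed $\mathcal{O}(\log(\delta^{-1})\log n)$ bound on both the update and the final processing time.

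The main obstacle, if any, is the monotonicity check in the first paragraph: one must verify that the reduction invokes its MCM subroutine only on edge-induced subgraphs of the input stream, so that the bounded-arboricity assumption is in fact inherited by every sub-instance. Since the reduction realizing Theorem~\ref{thm:main} follows the weight-class thresholding scheme of \cite{CS14}, this is immediate, and the rest of the argument is a clean black-box composition of the two theorems.
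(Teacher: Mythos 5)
Your proposal is correct and matches the paper's own argument exactly: the paper obtains this theorem by the same black-box composition, observing that the Chitnis \etal\ dynamic estimator (Theorem~\ref{thm:arboricity:MCM:dynamic}) satisfies the hypothesis of the reduction (Theorem~\ref{thm:main}) with $\lambda=(5\nu+9)(1+\epsilon)$, with monotonicity of bounded arboricity guaranteeing the estimator applies to each weight-thresholded substream. Your explicit $\epsilon$-rescaling to reconcile $2\lambda(1+\epsilon)$ with the stated $2(5\nu+9)(1+\epsilon)$ factor, and your accounting of the $\mathcal{O}(\log n)$ substreams in the space and update-time bounds, are details the paper leaves implicit but are handled correctly.
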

\noindent
In particular, for planar graphs, $\nu=3$, and by choosing $\delta=n^{-1}$ and $\epsilon$ as a small constant, 
the output of our algorithm is a $(48+\epsilon)$-approximation of the weight of a maximum weighted matching 
with probability at least $1-\frac{1}{n}$ using at most $\tilde{\mathcal{O}}(n^{4/5})$ space.

We further  remark that if $2$-passes over the stream are allowed, 
then we may use the recent results of \cite{ChitnisCEHMMV16} to obtain a $(2(5\nu+9)(1+\epsilon))$-approximation 
algorithm for MWM using only $\tilde{\mathcal{O}}(\sqrt{n})$ space.
\subsection{Overview}

We start by splitting the input stream into $\mathcal{O}(\log n)$ substreams $S_1,S_2,\cdots$, 
such that substream $S_i$ contains every edge $e \in E$ whose weight is at least $(1+\epsilon)^i$, that is, $w(e)\ge(1+\epsilon)^i$. 
Splitting the stream into sets of edges of weight only bounded below was used by Crouch and Stubbs in \cite{CS14}, leading to better approximation algorithms for MWM in the semi-streaming model. 

The construction from \cite{CS14} explicitly saves maximal matchings in multiple substreams by employing a greedy strategy for each substream. Once the stream completes, the algorithm from \cite{CS14} \emph{again} uses a greedy strategy, by starting from the substream of highest weight and proceeding downward to streams of lower weight. In each substream, the algorithm from \cite{CS14} adds as many edges as possible, while retaining a matching. However, with $o(n)$ space, we cannot store maximal matchings in memory and so we no longer have access to an oracle that explicitly returns edges from these matchings.

Instead, for each substream ${S}_i$, we treat its edges as unweighted edges and apply a MCM \emph{estimator}. We then \emph{implicitly} apply a greedy strategy, where we iteratively add as many edges possible from the remaining substreams of highest weight, tracking an estimate for both the weight of a maximum weighted matching, and the number of edges in the corresponding matching. The details of the algorithm appear in Section \ref{sec:alg}.

In our analysis, we use the simple but critical fact that, at any point, edges in our MWM estimator can conflict with at most two edges in the MCM estimator, similar to an idea used in \cite{CS14}. Therefore, if the MCM estimator for a certain substream is greater than double the number of edges in the associated matching, we add the remaining edges to our estimator, as shown below in Figure \ref{fig1}. Note that in some cases, we may discard many edges that the algorithm of \cite{CS14} chooses to output, but without being able to keep a maximal matching, this may be unavoidable.

More formally, for each $i$, let $U^*_i$ be a maximum cardinality matching for $S_i$. 
Then each edge of $U^*_i$ intersects with either one, or two edges of $U^*_j$, for all $j<i$. 
Thus, if $|U^*_{i-1}|>2|U^*_i|$, then at least $|U^*_{i-1}|-2|U^*_i|$ edges from $U^*_{i-1}$ can be 
added to $U^*_i$ while remaining a matching. 
We use a variable $B_i$ to serve as an estimator for this lower bound on the number of edges in 
a maximum weighted matching, including edges from $U^*_j$, for $j\ge i$. 
We then use the estimator for MCM in each substream $i$ as a proxy for $U^*_i$. 

\begin{figure}[!ht]
\label{fig1}
\centering
\begin{tikzpicture}[scale=2]
\tikzstyle{vertex} = [circle, minimum width=6pt, fill, inner sep=0pt]
\draw[black] (5cm,0cm) node[vertex] (v1){}
-- (6cm,0cm) node[vertex] (v2){};
\draw[black] (5cm,0.3cm) node[vertex] (v3){}
-- (6cm,0.3cm) node[vertex] (v4){};
\draw[black] (5cm,0.6cm) node[vertex] (v5){}
-- (6cm,0.6cm) node[vertex] (v6){};

\draw[dashed] (2cm,0cm) node[vertex] (v7){}
-- (3cm,0cm) node[vertex] (v8){};
\draw[dashed] (2cm,0.2cm) node[vertex] (v9){}
-- (3cm,0.2cm) node[vertex] (v10){};
\draw[dashed] (2cm,0.4cm) node[vertex] (v11){}
-- (3cm,0.4cm) node[vertex] (v12){};
\draw[dashed] (2cm,0.6cm) node[vertex] (v13){}
-- (3cm,0.6cm) node[vertex] (v14){};

\draw[dashed] (v1){}
-- (4.13cm,0.5cm) node[vertex] (a1){};
\draw[dashed] (v2){}
-- (6.87cm,0.5cm) node[vertex] (a2){};
\draw[dashed] (v3){}
-- (4.13cm,0.8cm) node[vertex] (a3){};
\draw[dashed] (v4){}
-- (6.87cm,0.8cm) node[vertex] (a4){};
\draw[dashed] (v5){}
-- (4.13cm,1.1cm) node[vertex] (a5){};
\draw[dashed] (v6){}
-- (6.87cm,1.1cm) node[vertex] (a6){};

\draw (2.5cm,0.5cm) ellipse (1cm and 1cm);
\node at (2.5cm,1cm)%
			{$U^*_{i-1}$};
\filldraw[shading=radial,inner color=white, outer color=gray!75, opacity=0.2] (4.5cm,0.5cm) ellipse (0.7cm and 1.2cm); 
\node at (4.23cm,0cm)%
			{$U^*_{i-1}$};
\filldraw[shading=radial,inner color=white, outer color=gray!75, opacity=0.2] (6.5cm,0.5cm) ellipse (0.7cm and 1.2cm); 
\node at (6.77cm,0cm)%
			{$U^*_{i-1}$};

\draw (5.5cm,0.5cm) ellipse (0.7cm and 1.2cm);
\node at (5.5cm,1cm)%
			{$U^*_i$};
\end{tikzpicture}\\
Figure \ref{fig1}: If $|U^*_i|>2|U^*_{i-1}|$, then some edge(s) from $U^*_{i-1}$ can be added while maintaining a matching.
\end{figure}

Our algorithm differs from the algorithm of \cite{BuryS15} in several points.  They consider substreams $S_i$ containing the edges with weight $[2^i,2^{i+1})$, and their algorithm estimates the number of each edges in each stream, and chooses to include the edges if both the number of the edges and their combined weight exceed certain thresholds, deemed to contribute a significant value to the estimate. However, this approach may not capture a small number of edges which nonetheless contribute a significant weight. 

Our greedy approach is able to handle both these facets of a MWM problem. Namely, by greedily  taking as many edges as possible from the heavier substreams, and then accounting for edges that may be conflicting with these in the next smaller substream, we are able to account for most of the weight. The formal analysis appears in Section \ref{sec:analysis}.
\section{Preliminaries}

Let $S$ be a stream of insertions of edges of an underlying undirected weighted graph
$G(V,E)$ with weights $w:E \rightarrow \REAL$. We assume that vertex set $V$ is fixed and given, and the size of $V$
is $|V|=n$. Observe that the size of stream $S$ is $|S|\le\binom{n}{2} = \frac{n(n-1)}{2}\le n^2$, so that we may assume that
$\mathcal{O}(\log|S|)=\mathcal{O}(\log n)$. Without loss of generality we assume that 
at time $i$ of stream $S$, edge $e_i$ arrives (or is revealed). 
Let $E_i$ denote those edges which are inserted (revealed) up to time $i$, 
i.e., $E_i=\{e_1,e_2,e_3,\cdots,e_i\}$. 
Observe that at every time $i\in [|S|]$ we have $|E_i|\le\binom{n}{2}\le n^2$, 
where $[x]=\{1,2,3,\cdots,x\}$ for some natural number $x$. 
We assume that at the end of stream $S$ all edges of graph $G(V,E)$ arrived, that is, $E=E_{|S|}$. 

We assume that there is a unique numbering for the vertices in
$V$ so that we can treat $v \in V$ as a unique number $v$ for $1 \le v \le n=|V|$.
We denote an undirected edge in $E$ with two endpoints $u,v\in V$ by
$(u,v)$. The graph $G$ can
have at most $\binom{n}{2} = n(n-1)/2$ edges.
Thus, each edge can also be thought of as referring to a unique number
between 1 and $\binom{n}{2}$.

The next theorems imply our results for graphs with bounded arboricity in the insert-only and dynamic models.
\begin{theorem}\label{thm:arboricity:MCM:insertion}\cite{EsfandiariHLMO15}
Let $G$ be an unweighted graph with arboricity $\nu$ and $n=\omega(\nu^2)$ vertices. Let $\epsilon,\delta\in(0,1)$ be two arbitrary positive values less than one. 
There exists an algorithm that estimates the size of a maximum matching in $G$ within a $(5\nu+9)(1+\epsilon)$-factor in the insertion-only streaming model with probability at least $(1-\delta)$, using $\tilde{\mathcal{O}}(\nu\epsilon^{-2}\log(\delta^{-1})n^{2/3})$ space.
Both the update time and final processing time are $\mathcal{O}(\log(\delta^{-1}))$.
In particular, for planar graphs, we can $(24+\epsilon)$-approximate the size of a maximum matching with probability at least $1-\delta$ using $\tilde{\mathcal{O}}(n^{2/3})$ space.
\end{theorem}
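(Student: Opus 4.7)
The plan is to decouple the problem into a purely combinatorial structural bound and a space-efficient streaming estimator for two simple counts. First, fix a degree threshold $T = \Theta(\nu)$ and define $h(G)$ to be the number of vertices of degree at least $T$, and $s(G)$ to be the number of edges both of whose endpoints have degree at least $T$. The key structural claim, which I would prove up front, is that $\beta(G) := h(G) + s(G)$ sandwiches the maximum matching: $M(G) \le \beta(G) \le (5\nu+9)\,M(G)$. The lower bound is a direct charging argument, assigning each matching edge either to a heavy endpoint (contributing to $h$) or to itself as a heavy edge (contributing to $s$). The upper bound uses the arboricity condition, which says that every $k$-vertex induced subgraph spans at most $\nu k$ edges; this caps the number of vertices that can be heavy and the number of edges that can be doubly-heavy in terms of $M(G)$, with the constants folded into the factor $5\nu+9$ via a discharging argument on a maximum matching.

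Second, I would design two independent estimators that run in parallel over the insertion stream and approximate $h(G)$ and $s(G)$ to within $(1\pm\epsilon)$. For $h(G)$, I would independently sample each vertex with probability $p = \Theta(n^{-1/3})$ and maintain its exact running degree by intercepting every incident edge insertion, then at the end rescale the number of sampled vertices whose degree met the threshold. For $s(G)$, I would take an independent vertex sample at rate $\Theta(n^{-1/6})$ and retain an edge $(u,v)$ only when both endpoints are sampled, so that each edge is kept with probability $\Theta(n^{-1/3})$; at the end I scan the retained edges, check (using the exact stored degrees) whether both endpoints are heavy, and rescale. Each estimator keeps $\tilde{\mathcal{O}}(n^{2/3})$ objects in expectation; running $O(\epsilon^{-2}\log(\delta^{-1}))$ independent copies and taking the median yields the advertised $\tilde{\mathcal{O}}(\nu\epsilon^{-2}\log(\delta^{-1})\,n^{2/3})$ space and the desired confidence.

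The main obstacle is proving that the subsampling concentrates, and in particular that $h(G)$ and $s(G)$ are large enough for a $(1\pm\epsilon)$-estimate with $\tilde{\mathcal{O}}(n^{2/3})$ samples. The hypothesis $n=\omega(\nu^2)$ is what makes this work: in the interesting regime it forces $\beta(G) = \Omega(n^{1/3})$, at which point a Chernoff bound on the Bernoulli sampling gives the required concentration. The residual regime, where $\beta(G)$ is too small for the samplers, is degenerate and can be handled by a deterministic exact count on the subgraph that fits in the space budget. Composing the structural $(5\nu+9)$ factor with the estimators' $(1+\epsilon)$ error, and taking a union bound over the two estimators, yields the final $(5\nu+9)(1+\epsilon)$-approximation with probability $1-\delta$. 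The update time is $\mathcal{O}(\log(\delta^{-1}))$ because each edge insertion affects only the $O(\log(\delta^{-1}))$ parallel copies of the sampler in which at least one endpoint (or the edge itself) is currently retained.
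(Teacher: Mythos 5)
A preliminary remark: the paper never proves Theorem \ref{thm:arboricity:MCM:insertion} at all; it is quoted verbatim from \cite{EsfandiariHLMO15} and used as a black box. So your proposal can only be measured against the proof in that cited work. Your skeleton does match it: a structural lemma sandwiching the matching size between degree-threshold counts, vertex-sampling estimators maintained over the stream (a rate-$\Theta(n^{-1/3})$ sample with exact degree counters for one count, and edges retained when both endpoints of a rate-$\Theta(n^{-1/6})$ sample are hit for the other, which is essentially how \cite{EsfandiariHLMO15} gets the $\tilde{\mathcal{O}}(\nu n^{2/3})$ space bound since $|E|\le\nu n$), followed by median amplification.

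However, your structural claim is false as stated, and the error is fatal rather than cosmetic. You define $s(G)$ as the number of edges both of whose endpoints have degree \emph{at least} $T$ (doubly-heavy edges). The quantity used in \cite{EsfandiariHLMO15} is the number of \emph{shallow} edges: edges both of whose endpoints have degree \emph{below} the threshold. With your definition the lower bound $M(G)\le h(G)+s(G)$ fails badly: let $G$ be a disjoint union of $n/2$ edges (arboricity $1$, every degree equal to $1$). For any threshold $T\ge 2$ you get $h(G)=s(G)=0$ while $M(G)=n/2$, so an estimator for $\beta(G)$ can output $0$ against a linear-size matching. Your own charging argument is internally inconsistent on exactly this case: a matching edge with no heavy endpoint cannot be charged ``to itself as a heavy edge,'' because such an edge is, by your definition, not doubly-heavy. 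With the shallow-edge definition the charge goes through (every matching edge either touches a heavy vertex or is shallow), and the $(5\nu+9)$ upper bound is then proved from the arboricity bound as in the cited work.

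Two secondary points. First, your use of the hypothesis $n=\omega(\nu^2)$ is not correct: it relates the arboricity to $n$ and says nothing about the matching size, so it cannot ``force $\beta(G)=\Omega(n^{1/3})$'' (a graph with a single edge and $n-2$ isolated vertices satisfies the hypothesis). Second, the degenerate regime cannot be handled by ``a deterministic exact count on the subgraph'': determining which vertices are heavy exactly requires degree counters for essentially all vertices, i.e.\ $\Omega(n)$ space. The standard fix, and the one consistent with the cited work, is to run in parallel a greedy maximal matching capped at the space budget; if the cap is never reached its size is a $2$-approximation (and $2\le 5\nu+9$), and if it is reached then $M(G)$ is large enough for the Chernoff-based analysis of the samplers to apply.
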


\begin{theorem}\label{thm:arboricity:MCM:dynamic}\cite{ChitnisCEHMMV16}
Let $G$ be an unweighted graph with arboricity $\nu$ and $n=\omega(\nu^2)$ vertices. Let $\epsilon,\delta\in(0,1)$ be two arbitrary positive values less than one. 
There exists an algorithm that estimates the size of a maximum matching in $G$ within a $(5\nu+9)(1+\epsilon)$-factor in the dynamic streaming model 
with probability at least $(1-\delta)$, using $\tilde{\mathcal{O}}(\nu\epsilon^{-2}\log(\delta^{-1})n^{4/5})$ space.
Both the update time and final processing time are $\mathcal{O}(\log(\delta^{-1}))$.
In particular, for planar graphs, we can $(24+\epsilon)$-approximate the size of a maximum matching with probability at least $1-\delta$ using $\tilde{\mathcal{O}}(n^{4/5})$ space.
\end{theorem}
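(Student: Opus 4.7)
The plan is to reduce the estimation of the maximum matching size to estimating two structural quantities identified by Esfandiari et al., and then to implement these estimators in the dynamic streaming model using $L_0$-sampling primitives. Specifically, fix a threshold $\alpha$ (to be tuned). Call a vertex \emph{heavy} if its degree exceeds $\alpha$ and \emph{light} otherwise, call an edge \emph{heavy} if both its endpoints are heavy, and let $h(G)$ denote the number of heavy edges and $s(G)$ the number of non-isolated light vertices. The key structural fact (inherited from the analysis of Theorem~\ref{thm:arboricity:MCM:insertion}) is that for any graph of arboricity $\nu$,
\[
\text{MCM}(G) \;\le\; h(G)+s(G) \;\le\; (5\nu+9)\cdot \text{MCM}(G),
\]
so a $(1+\epsilon)$-approximation of $h+s$ yields a $(5\nu+9)(1+\epsilon)$-approximation of the matching size. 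Thus it suffices to build two linear-sketch based estimators that survive edge deletions.

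For estimating $s(G)$, I would sample a set $V_s$ of roughly $\tilde{\mathcal{O}}(n/\alpha)$ vertices and, for each $v\in V_s$, maintain an $L_0$-sketch of the incidence vector of $v$. Because $L_0$-sketches support insertions and deletions and give a $(1\pm\epsilon)$ estimate of the number of nonzero coordinates, each sampled vertex can, at the end of the stream, be tested for being light (estimated degree at most $\alpha(1+\epsilon)$) and non-isolated (estimated degree at least $1$). The fraction of $V_s$ classified as non-isolated and light, scaled by $n$, gives the desired estimate of $s(G)$ by a Chernoff bound, provided $|V_s|$ is large enough to resolve an additive $\epsilon\cdot\text{MCM}$ error.

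For estimating $h(G)$, I would use an $L_0$-sampler on the edge-frequency vector of the dynamic stream to obtain roughly $\tilde{\mathcal{O}}(n/\alpha)$ uniform samples from the final edge set $E$, and use the same $L_0$-sketches on the endpoints of each sampled edge to decide whether both endpoints are heavy. Since $|E|$ itself can be estimated to $(1\pm\epsilon)$ from an $F_0$-sketch, the fraction of sampled heavy edges times the estimate of $|E|$ yields a $(1\pm\epsilon)$-approximation of $h(G)$; bounded arboricity implies $h(G)\le \nu n/\alpha\cdot O(1)$, so a sample of this size suffices. All failure probabilities are driven below $\delta$ by a union bound, paying an extra $\log(\delta^{-1})$ factor; the $L_0$-sampler and sketches each add only $\text{polylog}(n)$ factors, giving the $\tilde{\mathcal{O}}$ hidden factors.

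The main obstacle, and the reason the space is $n^{4/5}$ rather than the insertion-only $n^{2/3}$, is choosing $\alpha$ to balance the two sampling costs in the \emph{dynamic} setting. In the insertion-only model one can piggyback on the stream and count heavy edges more directly, but under deletions we must pre-commit to $L_0$-sketches for every vertex we might query, so sampling $V_s$ of size $n/\alpha$ costs $\tilde{\mathcal{O}}(n/\alpha)\cdot\text{polylog}(n)$ words just for vertex degree sketches, while the edge $L_0$-samplers also cost $\tilde{\mathcal{O}}(n/\alpha)$. The catch is that we must verify the heaviness of endpoints of sampled edges, and if $\alpha$ is too small then too many vertices are heavy and the additive error in $s$ swamps the matching estimate, while if $\alpha$ is too large the additive error in $h$ does. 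Setting $\alpha \asymp n^{1/5}$ (and absorbing $\nu,\epsilon,\log$ factors) balances these competing errors against the target accuracy $\epsilon\cdot\text{MCM}(G)$ and yields the stated $\tilde{\mathcal{O}}(\nu \epsilon^{-2}\log(\delta^{-1})\, n^{4/5})$ bound, completing the estimator.
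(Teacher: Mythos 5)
First, a point of context: the paper never proves this statement at all --- Theorem~\ref{thm:arboricity:MCM:dynamic} is quoted from \cite{ChitnisCEHMMV16} and used purely as a black box inside the reduction of Theorem~\ref{thm:main}. So your attempt is effectively a reconstruction of the proof in that external paper, and it contains a genuine error at its foundation: the structural inequality you ``inherit'' from Esfandiari et al.\ is stated backwards. You define $h(G)$ as the number of edges with \emph{both endpoints heavy} and $s(G)$ as the number of \emph{non-isolated light vertices}, and claim $\mathrm{MCM}(G)\le h(G)+s(G)\le(5\nu+9)\,\mathrm{MCM}(G)$. The upper bound is false: for the star $K_{1,n-1}$ (arboricity $1$, maximum matching of size $1$) every leaf is a non-isolated light vertex, so $s(G)=n-1$, while $(5\nu+9)\,\mathrm{MCM}(G)=14$. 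The correct characterization, which both Theorem~\ref{thm:arboricity:MCM:insertion} and the cited dynamic result rest on, is the transpose of yours: $h$ counts \emph{heavy vertices} and $s$ counts \emph{shallow edges} (edges both of whose endpoints are light). Then every matched edge either has a heavy endpoint, to which it is charged injectively, or is itself a shallow edge, giving $\mathrm{MCM}\le h+s$; conversely bounded arboricity bounds the number of heavy vertices by $O(\nu)\,\mathrm{MCM}$, and the shallow subgraph has bounded maximum degree, so $s\le O(\nu)\,\mathrm{MCM}$ as well. On the star this gives $h+s=1$, as it should.

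Second, even after transposing the definitions, your route to the $n^{4/5}$ bound cannot work as described, because you use the heaviness threshold $\alpha$ as a free space--accuracy knob and set $\alpha\asymp n^{1/5}$. The factor $(5\nu+9)$ is available only when the threshold is $\Theta(\nu)$: with threshold $n^{1/5}$ the shallow subgraph merely has maximum degree $n^{1/5}$, so the inequality degrades to an approximation factor polynomial in $n$, not a constant depending on $\nu$. The degree threshold and the space tradeoff are decoupled in the actual argument. Relatedly, estimating either count by sampling $\tilde{\mathcal{O}}(n/\alpha)$ vertices or edges incurs \emph{additive} error on the scale of $n$ divided by the square root of the sample size, which cannot be absorbed into a multiplicative $(1+\epsilon)$ guarantee when $\mathrm{MCM}(G)$ is small (it can be $O(1)$); your remark that the sample should be ``large enough to resolve an additive $\epsilon\cdot\mathrm{MCM}$ error'' would then force a sample of size close to $n^2$. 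The proof in \cite{ChitnisCEHMMV16} handles this with a case split on the matching size: below a size threshold $k$ the relevant quantities are computed essentially exactly via their sparse-recovery/kernelization sketch for small matchings (cost roughly $\tilde{\mathcal{O}}(k^2)$), and only above the threshold is sampling invoked, where relative concentration is then available; balancing the two regimes (around $k\approx n^{2/5}$) is what actually produces the exponent $4/5$. Your proposal has no mechanism for the small-matching regime, so the concluding union-bound paragraph does not repair it.
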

\section{Algorithm}
\label{sec:alg}
For a weighted graph $G(V,E)$ with weights $w:E\rightarrow\REAL$ such that the minimum weight of an edge is at least $1$ and the maximum weight $W$ of an edge is polynomially bounded in $n$, i.e., 
$W=n^c$ for some constant $c$, for $T=\lceil \log_{1+\epsilon} W\rceil $, we create $T+1$ substreams such that substream $S_i=\left\{e\in S: w(e)\ge(1+\epsilon)^i\right\}$. 

Given access to a streaming algorithm {\sf MCM Estimator} which estimates the size of a maximum cardinality matching of an unweighted graph $G$ 
within a factor of $\lambda$ with probability at least $(1-\delta)$, we use {\sf MCM Estimator} as a black box algorithm on each $S_i$ and record the estimates. 
 In general, for a substream $S_i$, we track an estimate $A_i$, of the weight of a maximum weighted matching of the subgraph whose edges are in the substream $S_i$, along with an estimate, 
 $B_i$, which represents the number of edges in our estimate $A_i$. 
 The estimator $B_i$ also serves as a running lower bound estimator for the number of edges in a maximum matching. 
 We greedily add edges to our estimation of the weight of a maximum weighted matching of graph $G$. 
 Therefore, if the estimator $\widehat{M_{i-1}}$ for the maximum cardinality matching of the substream $S_{i-1}$ is more than double the number of 
 edges in $B_i$ represented by our estimate $A_i$ of the substream $S_i$, we let $B_{i-1}$ be $B_i$ plus the difference $\widehat{M_{i-1}} -2B_i$, and 
 let $A_{i-1}$ be $A_i$ plus $(\widehat{M_{i-1}} -2B_i) \cdot (1+\epsilon)^{i-1}$. 
We iterate through the substream estimators, starting from the substream $S_T$ of largest weight, and proceeding downward to substreams of lower weight. 
We initialize our greedy approach by setting $B_T=\widehat{M_T}$, equivalent to taking all edges in $\widehat{M_T}$.

\begin{algorithm*}[hbt]
\caption{Estimating Weighted Matching in Data Streams}
\textbf{Input:} A stream $S$ of edges of an underlying graph $G(V,E)$ with weights $w:E \rightarrow\REAL^+$ such that the maximum weight $W$ 
of an edge is polynomially bounded in $n$, i.e, $W=n^c$ for some constant $c$.\\
\textbf{Output:} An estimator $\hat{A}$ of $w(M^*)$, the weight of a maximum weighted matching $M^*$, in $G$. 
\begin{algorithmic}[1]
\State{Let $A_i$ be a running estimate for the weight of a maximum weighted matching.}
\State{Let $B_i$ be a running lower bound estimate for the number of edges in a maximum weighted matching.}
\State{Initialize $A_{T+1}=0$, $B_{T+1}=0$, and $\widehat{M_{T+1}}=0$. }
\For{$i=T$ to $i=0$}
\State{Let $S_i=\{e\in S: w(e)\ge(1+\epsilon)^i\}$ be a substream of $S$  of edges whose weights are at least $(1+\epsilon)^i$.}
\State{Let $S'_i$ be unweighted versions of edges in $S_i$.}
\State{Let $\widehat{S'_i}$ be the output of {\sf MCM Estimator} for each $S'_i$ with parameter $\delta'=\frac{\delta}{T}$.}
\State{Let $\widehat{M_i}=\max(\widehat{M_{i+1}},\widehat{S'_i})$.}
\State{Set $\Delta_i=\max(0,\lceil\widehat{M_i}-2B_{i+1}\rceil)$.}
\State{Update $B_i=B_{i+1}+\Delta_i$.}
\State{Update $A_i=A_{i+1}+(1+\epsilon)^i\Delta_i$.}
\EndFor
\State{Output estimate $\hat{A}=A_0$.}
\end{algorithmic}
\end{algorithm*}

\noindent
We note that the quantities $A_i$ and $B_i$ satisfy the following properties, which will be useful in the analysis.

\begin{observation}
\label{obs:a:sum}
$A_j=\sum_{i=j}^{T}(1+\epsilon)^i\Delta_i$
\end{observation}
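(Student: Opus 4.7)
The plan is to prove Observation \ref{obs:a:sum} by a straightforward downward induction on $j$, since the quantity $A_j$ is defined recursively in the algorithm and the claim is essentially the telescoped form of that recursion.

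For the base case I would take $j = T+1$. The algorithm initializes $A_{T+1} = 0$, and the right-hand side $\sum_{i=T+1}^{T}(1+\epsilon)^i \Delta_i$ is an empty sum and so also equals $0$. For the inductive step, assuming $A_{j+1} = \sum_{i=j+1}^{T}(1+\epsilon)^i \Delta_i$, I would invoke the update rule from line 11 of the algorithm, namely $A_j = A_{j+1} + (1+\epsilon)^j \Delta_j$, and substitute in the inductive hypothesis to obtain
\[
A_j \;=\; (1+\epsilon)^j \Delta_j + \sum_{i=j+1}^{T}(1+\epsilon)^i \Delta_i \;=\; \sum_{i=j}^{T}(1+\epsilon)^i \Delta_i,
\]
which is exactly the desired identity. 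Iterating this from $j = T$ down to $j = 0$ yields the claim for every index in the range.

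There is essentially no obstacle here: the observation is a direct unrolling of the recurrence $A_i = A_{i+1} + (1+\epsilon)^i \Delta_i$ with boundary value $A_{T+1} = 0$, and no properties of $\Delta_i$, the substreams, or the MCM estimator are needed. The only minor care is to match the empty-sum convention in the base case and to confirm that the recurrence in the pseudocode is indeed being applied for every $i$ from $T$ downward, so that the induction carries through the full range stated in the observation.
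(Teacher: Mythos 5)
Your proof is correct: the paper states this as an unproved observation, treating it as immediate from the algorithm's initialization $A_{T+1}=0$ and update rule $A_i=A_{i+1}+(1+\epsilon)^i\Delta_i$, and your downward induction is exactly the formalization of that unrolling. Nothing is missing; the empty-sum base case and the recurrence substitution are all that is needed.
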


\begin{observation}
\label{obs:b:sum}
$B_j=\sum_{i=j}^{T}\Delta_i$
\end{observation}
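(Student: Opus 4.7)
The plan is to prove this by a straightforward downward induction on $j$, exploiting the telescoping nature of the update rule inside the main loop. The algorithm initializes $B_{T+1}=0$ and then, for each $i$ from $T$ down to $0$, performs the update $B_i = B_{i+1} + \Delta_i$. So the identity should fall out by simply unrolling this recurrence back to the base value $B_{T+1}=0$.

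Concretely, I would take the base case $j=T+1$ as vacuous, since $B_{T+1}=0$ matches the empty sum $\sum_{i=T+1}^{T}\Delta_i$. For the inductive step, assuming $B_{j+1} = \sum_{i=j+1}^{T}\Delta_i$, the update rule gives
\[
B_j \;=\; B_{j+1} + \Delta_j \;=\; \Delta_j + \sum_{i=j+1}^{T}\Delta_i \;=\; \sum_{i=j}^{T}\Delta_i,
\]
completing the induction.

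There is essentially no obstacle here: the observation is a pure bookkeeping identity dictated by the algorithm's own definition of $B_i$, and there is nothing probabilistic or approximate to control. The companion Observation \ref{obs:a:sum} will be proved by the identical induction, except that each $\Delta_i$ contributes with the weight factor $(1+\epsilon)^i$ instead of $1$, so that the update $A_i = A_{i+1} + (1+\epsilon)^i\Delta_i$ telescopes to the claimed closed form. Both identities will then serve as convenient expressions for $A_0$ and for the edge-count estimate $B_0$ when we bound the greedy estimator against the true maximum weighted matching in Section \ref{sec:analysis}.
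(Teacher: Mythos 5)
Your proof is correct and matches the paper's reasoning: the paper states this as an immediate consequence of the update rule $B_i = B_{i+1} + \Delta_i$ together with the initialization $B_{T+1}=0$, which is exactly the telescoping induction you carry out explicitly.
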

\section{Analysis}
\label{sec:analysis}
\begin{lemma}
\label{lem:b:bounds}
For all $i$, $B_i\le\widehat{M_i}\le2B_i$.
\end{lemma}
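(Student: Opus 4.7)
The plan is to prove the lemma by reverse induction on $i$, starting from the initialization index $i = T+1$ and descending to $i = 0$. The base case is trivial: the initialization sets $B_{T+1} = \widehat{M_{T+1}} = 0$, so $0 \le 0 \le 0$ holds vacuously.

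For the inductive step, I assume $B_{i+1} \le \widehat{M_{i+1}} \le 2B_{i+1}$ and want to establish the same sandwich for index $i$. I would first observe the monotonicity $\widehat{M_i} = \max(\widehat{M_{i+1}}, \widehat{S'_i}) \ge \widehat{M_{i+1}}$, which will be the main handle for the lower bound in one of the cases. I then split into two cases according to the sign governing the $\max$ in the definition of $\Delta_i$:
\begin{itemize}
\item \textbf{Case 1: $\widehat{M_i} \le 2B_{i+1}$.} Here $\Delta_i = 0$, so $B_i = B_{i+1}$. The upper bound $\widehat{M_i} \le 2B_i$ is immediate from the case hypothesis. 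For the lower bound, I chain $B_i = B_{i+1} \le \widehat{M_{i+1}} \le \widehat{M_i}$, using the inductive hypothesis and monotonicity.
\item \textbf{Case 2: $\widehat{M_i} > 2B_{i+1}$.} Here $\Delta_i = \lceil \widehat{M_i} - 2B_{i+1}\rceil \ge \widehat{M_i} - 2B_{i+1}$, giving
\[
2B_i \;=\; 2B_{i+1} + 2\Delta_i \;\ge\; 2B_{i+1} + 2(\widehat{M_i} - 2B_{i+1}) \;=\; 2\widehat{M_i} - 2B_{i+1} \;\ge\; \widehat{M_i},
\]
where the last inequality uses the case hypothesis $\widehat{M_i} \ge 2B_{i+1}$. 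For the other direction, since the MCM estimator returns an integer cardinality estimate $\widehat{S'_i}$, the quantity $\widehat{M_i}$ is an integer, and an easy induction shows that every $B_j$ is an integer as well; hence the ceiling is redundant and $\Delta_i = \widehat{M_i} - 2B_{i+1}$, which yields $B_i = B_{i+1} + \widehat{M_i} - 2B_{i+1} = \widehat{M_i} - B_{i+1} \le \widehat{M_i}$ because $B_{i+1} \ge 0$.
\end{itemize}

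The one step I expect to require care is the lower bound in Case~2, because a naive treatment of the ceiling loses an additive $1$ and can fail when $B_{i+1} = 0$. I plan to resolve this by explicitly noting the integrality of $\widehat{M_i}$ inherited from the matching-size estimator (and the integrality of each $B_j$ inherited from $B_{T+1} = 0$ and the $\lceil \cdot \rceil$ in $\Delta_i$), which collapses the ceiling and makes the algebra clean. Everything else is bookkeeping: two cases, one line each, and invoking the inductive hypothesis only through the elementary facts $\widehat{M_i} \ge \widehat{M_{i+1}} \ge B_{i+1}$ and $B_{i+1} \ge 0$.
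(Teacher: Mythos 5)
Your proof is correct and follows essentially the same route as the paper's: reverse induction with the identical two-case split on whether $\widehat{M_i}\le 2B_{i+1}$, and the same algebra in each case. The only differences are minor refinements---you anchor the induction at $i=T+1$ rather than treating $i=T$ as a separate base case, and you explicitly justify collapsing the ceiling in $\Delta_i$ via integrality of $\widehat{M_i}$ and the $B_j$, a point the paper silently glosses over (its proof simply writes $\Delta_i=\widehat{M_i}-2B_{i+1}$, which indeed needs that justification, since with a fractional estimate and $B_{i+1}=0$ the bound $B_i\le\widehat{M_i}$ would fail).
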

\begin{proof}
We prove the statement by induction on $i$, starting from  $i=T$ down to $i=0$. 
For the base case $i=T$, we initialize $B_{i+1}=0$. In particular, $\Delta_i=\widehat{M_i} $, so 
$   B_i=B_{i+1}+\Delta_i=\widehat{M_i}$, and the desired inequality follows.
\vskip0.2in\noindent
Now, we suppose the claim is true for $B_{i+1}\le\widehat{M_{i+1}}\le2B_{i+1}$. 
Next, we prove it for $B_{i}\le\widehat{M_{i}}\le2B_{i}$. To prove the claim for $i$ we consider two cases. 
The first case is when $2B_{i+1}<\widehat{M_i}$. 
Then
\begingroup
\allowdisplaybreaks
\begin{align*}
B_i&=B_{i+1}+\Delta_i&\text{(By definition)}\\
&=B_{i+1}+\widehat{M_i}-2B_{i+1}&(\Delta_i=\widehat{M_i}-2B_{i+1})\\
&=\widehat{M_i}-B_{i+1}\\
&\le\widehat{M_i}
\end{align*}
\endgroup
\noindent
Additionally,
\begingroup
\allowdisplaybreaks
\begin{align*}
\widehat{M_i}&<\widehat{M_i}+(\widehat{M_i}-2B_{i+1})&(2B_{i+1}<\widehat{M_i})\\
&=2(B_{i+1}+(\widehat{M_i}-2B_{i+1}))\\
&=2(B_{i+1}+\Delta_i)&(\Delta_i=\widehat{M_i}-2B_{i+1})\\
&=2B_i&\text{(By definition)}
\end{align*}
\endgroup
and so $B_i\le\widehat{M_i}\le2B_i$.
\vskip 0.2in\noindent
The second case is when $\widehat{M_i}\le2B_{i+1}$. Then, by definition, $B_i=B_{i+1}$. Since $S'_{i+1}$ is a subset of $S'_i$, then
\begingroup
\allowdisplaybreaks
\begin{align*}
B_i=B_{i+1}&\le\widehat{M_{i+1}}&( \text{Inductive hypothesis})\\
&\le\widehat{M_i}&(\widehat{M_i}=\max(\widehat{M_{i+1}},\widehat{S'_i}))\\
&\le 2B_{i+1} = 2B_i &\text{($\widehat{M_i}\le2B_{i+1}$)}
\end{align*}
\endgroup
and again $B_i\le\widehat{M_i}\le2B_i$, which completes the proof.
\end{proof}

\begin{corollary}
\label{cor:b:bounds}
Suppose for all $i$, the estimator $\widehat{M_i}$ satisfies $\widehat{M_i}\le|U^*_i|\le\lambda\widehat{M_i}$, 
where $U^*_i$ is the size of a maximum cardinality matching of $S'_i$. Then $B_i\le|U^*_i|\le2\lambda B_i$.
\end{corollary}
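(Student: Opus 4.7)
The plan is to derive this corollary as a direct consequence of Lemma \ref{lem:b:bounds} combined with the assumed approximation guarantee on $\widehat{M_i}$. Specifically, Lemma \ref{lem:b:bounds} already gives us the two-sided bound $B_i \le \widehat{M_i} \le 2 B_i$ for every index $i$, so the remaining work is simply to sandwich $|U^*_i|$ between appropriate multiples of $B_i$ by chaining inequalities.

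First, I would establish the lower bound $B_i \le |U^*_i|$. Starting from Lemma \ref{lem:b:bounds}, $B_i \le \widehat{M_i}$, and by the hypothesis $\widehat{M_i} \le |U^*_i|$, transitivity immediately yields $B_i \le |U^*_i|$. Next, for the upper bound $|U^*_i| \le 2\lambda B_i$, I would use the hypothesis $|U^*_i| \le \lambda \widehat{M_i}$ together with the other direction of Lemma \ref{lem:b:bounds}, namely $\widehat{M_i} \le 2 B_i$, to conclude
\[
|U^*_i| \le \lambda \widehat{M_i} \le \lambda \cdot 2 B_i = 2\lambda B_i.
\]
Combining both inequalities gives $B_i \le |U^*_i| \le 2\lambda B_i$, as desired.

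There is no serious obstacle here; the entire content of the corollary is a mechanical chaining of two pairs of inequalities. The only thing worth noting is that the factor of $2$ blow-up on the upper side is inherited from the slack in Lemma \ref{lem:b:bounds} (which in turn reflects the fact that each edge of $U^*_i$ can conflict with up to two edges of $U^*_{i-1}$), while the approximation factor $\lambda$ comes entirely from the external \textsf{MCM Estimator} guarantee. So this corollary is essentially packaging Lemma \ref{lem:b:bounds} into a form directly usable in the main analysis of Theorem \ref{thm:main}.
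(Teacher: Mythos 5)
Your proof is correct and follows essentially the same route as the paper: both chain the two-sided bound $B_i \le \widehat{M_i} \le 2B_i$ from Lemma \ref{lem:b:bounds} with the assumed guarantee $\widehat{M_i} \le |U^*_i| \le \lambda\widehat{M_i}$ to obtain $B_i \le |U^*_i| \le 2\lambda B_i$. No gaps.
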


\begin{proof}
By Lemma \ref{lem:b:bounds}, $\widehat{M_i}\le2B_i$, so then $\lambda\widehat{M_i}\le2\lambda B_i$. 
Similarly, by Lemma \ref{lem:b:bounds}, $B_i\le\widehat{M_i}$. But by assumption, $\widehat{M_i}\le|U^*_i|\le\lambda\widehat{M_i}$, and so
\[B_i\le\widehat{M_i}\le|U^*_i|\le\lambda\widehat{M_i}\le2\lambda B_i.\]
\end{proof}

\begin{lemma}
\label{lem:partition:inequality}
Suppose for all $i$, the estimator $\widehat{M_i}$ satisfies $\widehat{M_i}\le|U^*_i|\le\lambda\widehat{M_i}$, 
where $U^*_i$ is the size of a maximum cardinality matching of $S'_i$. Then for all $j$,
\[\sum_{i=j}^T\Delta_i\le\sum_{i=j}^T|M^*\cap(S_j-S_{j+1})|\le\sum_{i=j}^T 2\lambda\Delta_i,\]
where $M^*$ is a maximum weighted matching.
\end{lemma}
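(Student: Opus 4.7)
My plan is first to simplify the middle summation via telescoping. The substreams form a nested chain $\emptyset = S_{T+1} \subseteq S_T \subseteq \cdots \subseteq S_0$, so they partition each $S_j$ into the disjoint pieces $S_i \setminus S_{i+1}$ for $i = j,\ldots,T$. Reading the inner $j$ of $S_j - S_{j+1}$ as the summation index $i$ (otherwise the summand is constant in $i$ and the outer sum is puzzling), the middle sum collapses to $|M^* \cap S_j|$. Combined with $\sum_{i=j}^T \Delta_i = B_j$ from Observation \ref{obs:b:sum}, the lemma is equivalent to
\[B_j \le |M^* \cap S_j| \le 2\lambda B_j.\]

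The upper bound is the short half. I would note that $M^* \cap S_j$ is itself a matching whose edges lie in $S_j$, so $|M^* \cap S_j| \le |U^*_j|$. Corollary \ref{cor:b:bounds} then furnishes $|U^*_j| \le 2\lambda B_j$, and chaining the two inequalities closes this side.

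The lower bound $B_j \le |M^* \cap S_j|$ is the substantive direction, and I would attack it by downward induction on $j$, from $j=T$ to $j=0$. The recursive identities $B_j = B_{j+1} + \Delta_j$ and $|M^* \cap S_j| = |M^* \cap S_{j+1}| + |M^* \cap (S_j \setminus S_{j+1})|$ reduce the inductive step to bounding $\Delta_j$ against $|M^* \cap (S_j \setminus S_{j+1})|$, modulo the slack left by the inductive hypothesis. When $\widehat{M_j} \le 2B_{j+1}$ we have $\Delta_j = 0$ and there is nothing to prove, so the work is in the case $\Delta_j = \widehat{M_j} - 2B_{j+1}$. Here the key tool is the structural observation emphasized in the overview (Figure \ref{fig1}) that every edge of $U^*_j$ conflicts with at most two edges of any matching chosen from the strictly heavier substream $S_{j+1}$. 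An exchange/augmenting-path argument between $U^*_j$ and $M^* \cap S_j$ would then try to certify that the edges of $U^*_j$ exceeding what $M^* \cap S_{j+1}$ can block must be witnessed by edges that $M^*$ itself places in the new weight band $[(1+\epsilon)^j, (1+\epsilon)^{j+1})$, since otherwise one could swap along such an augmenting path to produce a strictly heavier matching, contradicting the optimality of $M^*$. The hard part will be carrying out this exchange rigorously: showing that the $\widehat{M_j} - 2B_{j+1}$ excess actually forces edges of $M^*$ into $S_j \setminus S_{j+1}$ (rather than being diffused into unrelated substreams), and tracking the factor-of-two slack that the algorithm's update rule is designed precisely to absorb.
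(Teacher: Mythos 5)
Your reading of the statement (the summand should be $|M^*\cap(S_i-S_{i+1})|$, telescoping to $|M^*\cap S_j|$, so the claim is $B_j\le|M^*\cap S_j|\le2\lambda B_j$) matches the paper's intent, and your upper-bound argument is exactly the paper's: $M^*\cap S_j$ is a matching contained in $S_j$, hence $|M^*\cap S_j|\le|U^*_j|\le2\lambda B_j$ by Corollary \ref{cor:b:bounds}.

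The gap is in the lower bound, and it is worse than you suspect: the exchange argument you outline is not merely ``hard to carry out rigorously,'' it cannot be carried out, because the inequality $B_j\le|M^*\cap S_j|$ is false as stated. Take a path $a,b,c,d$ with $w(ab)=w(cd)=1$, $w(bc)=1.5$, $\eps=0.2$, and exact estimators ($\lambda=1$). Then $S_2=\{bc\}$, so $\widehat{M_2}=1$ and $B_2=\Delta_2=1$, yet the unique maximum weighted matching is $M^*=\{ab,cd\}$, giving $|M^*\cap S_2|=0$. The two-conflicts fact cuts against you here: a heavy edge can be blocked by \emph{two} lighter edges whose combined weight beats it, so the optimality of $M^*$ never forces edges of $M^*$ into the heavy bands, and no augmenting-path swap produces a contradiction. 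You should also know that the paper's own proof of this direction is no better than your sketch: it consists of the single unsupported assertion that ``$B_i$ is a running estimate of the lower bound on the number of edges in $M^*\cap S_i$,'' which the example above refutes. The correct repair --- and all that Theorem \ref{thm:main} actually needs --- is to lower-bound $w(M^*)$ by the weight of a \emph{different}, greedily constructed witness matching rather than by band-by-band counts of $M^*$: when $\Delta_i>0$ we have $\Delta_i\le|U^*_i|-2B_{i+1}$, and since each edge of a matching $N_{i+1}\subseteq S_{i+1}$ with $|N_{i+1}|=B_{i+1}$ conflicts with at most two edges of $U^*_i$, one can extend $N_{i+1}$ by $\Delta_i$ edges of $U^*_i$, each of weight at least $(1+\eps)^i$. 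Downward induction then yields a matching $N_0$ with $w(N_0)\ge\sum_i\Delta_i(1+\eps)^i=A_0$, whence $w(M^*)\ge w(N_0)\ge A_0$. So your instinct (downward induction plus the two-conflicts fact) is the right machinery, but it must be applied to build a witness matching, not to the fixed matching $M^*$.
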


\begin{proof}
Since $M^*$ is a matching, then the number of edges in $M^*$ with weight at least $(1+\epsilon)^j$ is at most $|U^*_j|$. Thus,
\[\sum_{i=j}^T|M^*\cap(S_j-S_{j+1})|\le|U^*_j|.\]
Note that by Observation \ref{obs:b:sum}, $\sum_{i=j}^T\Delta_i=B_j$, so then by Corollary \ref{cor:b:bounds},
\[\sum_{i=j}^T|M^*\cap(S_j-S_{j+1})|\le2\lambda\sum_{i=j}^T\Delta_i.\]
On the other hand, $B_i$ is a running estimate of the lower bound on the number of edges in $M^*\cap S_i$, 
 so
\[\sum_{i=j}^T\Delta_i=B_j\le\sum_{i=j}^T|M^*\cap(S_j-S_{j+1})|,\]
as desired.
\end{proof}

\begin{lemma}
\label{lem:union:bound}
With probability at least $1-\delta$, the estimator $\widehat{M_i}$ satisfies $\widehat{M_i}\le|U^*_i|\le\lambda\widehat{M_i}$ for all $i$, where $U^*_i$ is the maximum cardinality matching of $S'_i$.
\end{lemma}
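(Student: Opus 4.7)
The plan is to combine a union bound over the $T+1$ black-box invocations of {\sf MCM Estimator} with a short backward induction. Since each call uses failure parameter $\delta' = \delta/T$ (up to absorbing a constant factor into the $c\log n$ denominator appearing in Theorem~\ref{thm:main}), I would let $\mathcal{E}_i$ denote the event that the $i$-th call returns $\widehat{S'_i}$ satisfying $\widehat{S'_i} \le |U^*_i| \le \lambda \widehat{S'_i}$. A direct union bound then gives
\[
\Pr\left[\,\bigcap_{i=0}^{T} \mathcal{E}_i\,\right] \ge 1 - (T+1)\delta' \ge 1 - \delta.
\]

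The rest of the argument is deterministic, conditioned on $\bigcap_i \mathcal{E}_i$. I would prove the claim $\widehat{M_i} \le |U^*_i| \le \lambda \widehat{M_i}$ by backward induction on $i$, from $i=T$ down to $i=0$. The base case $i=T$ is immediate: since $\widehat{M_{T+1}} = 0$, we have $\widehat{M_T} = \widehat{S'_T}$, so the bound reduces to $\mathcal{E}_T$. For the inductive step, the key observation is the monotonicity $S'_{i+1} \subseteq S'_i$ (any edge of weight at least $(1+\epsilon)^{i+1}$ has weight at least $(1+\epsilon)^i$), which gives $|U^*_{i+1}| \le |U^*_i|$. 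Using the inductive hypothesis on $\widehat{M_{i+1}}$ together with $\mathcal{E}_i$ on $\widehat{S'_i}$,
\[
\widehat{M_i} = \max(\widehat{M_{i+1}}, \widehat{S'_i}) \le \max(|U^*_{i+1}|, |U^*_i|) = |U^*_i|,
\]
and the lower bound is immediate: $|U^*_i| \le \lambda\widehat{S'_i} \le \lambda \widehat{M_i}$.

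I do not expect a real obstacle; the lemma is essentially a union bound plus bookkeeping. The only mildly subtle point is that $\widehat{M_i}$ is defined as a running maximum rather than as the raw estimator output $\widehat{S'_i}$, so one must verify that both directions of the accuracy guarantee survive the $\max$ operation. The lower direction is automatic because the $\max$ only increases $\widehat{M_i}$, while the upper direction is precisely where the monotonicity $|U^*_{i+1}| \le |U^*_i|$ is needed.
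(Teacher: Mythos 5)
Your proposal is correct, and its probabilistic core---a union bound over the $T+1$ calls to {\sf MCM Estimator}, each run with parameter $\delta'=\delta/T$---is exactly the paper's argument. The difference is what happens after the union bound. The paper's proof is a single sentence that treats the per-call accuracy guarantee as if it applied directly to $\widehat{M_i}$; but the black-box guarantee concerns the raw output $\widehat{S'_i}$, whereas $\widehat{M_i}=\max(\widehat{M_{i+1}},\widehat{S'_i})$ is a running maximum, so the lemma as stated does not follow from the union bound alone. Your backward induction fills exactly this gap: the direction $|U^*_i|\le\lambda\widehat{S'_i}\le\lambda\widehat{M_i}$ is free because the max can only increase the estimator, while the direction $\widehat{M_i}\le|U^*_i|$ genuinely needs the containment $S'_{i+1}\subseteq S'_i$, hence $|U^*_{i+1}|\le|U^*_i|$ (the same monotonicity the paper invokes in the second case of the proof of Lemma~\ref{lem:b:bounds}). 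So your argument is strictly more complete than the paper's, at the cost of a few lines. Two minor remarks: first, your inequality $1-(T+1)\delta'\ge 1-\delta$ with $\delta'=\delta/T$ is, as you note yourself, off by the harmless factor $(T+1)/T$; the paper has the same slippage (its proof union-bounds only over $i=1,\dots,T$ even though the algorithm runs $T+1$ estimators), and it disappears into the constant $c$ in the stated space bound $\mathcal{O}\left(S\left(n,\frac{\delta}{c\log n}\right)\log n\right)$. Second, your base case tacitly uses $\widehat{S'_T}\ge 0$, which is innocuous for an estimator of a matching size.
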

\begin{proof}
Since $\widehat{M_i}\le|U^*_i|\le\lambda\widehat{M_i}$ succeeds with probability at least $1-\frac{\delta}{T}$, then the probability $\hat{M_i}$ succeeds for $i=1, 2,\ldots,T$ is at least $1-\delta$ by a union bound.
\end{proof}

\noindent
We now prove our main theorem.

\begin{proofof}{Theorem \ref{thm:main}}
We complete the proof of Theorem \ref{thm:main} by considering the edges in a maximum weighted matching $M^*$. We partition these edges by weight and bound the number of edges in each partition. We will show that $A_0 \leq w(M^*)\leq 2\lambda(1+\epsilon)A_0$. First, we have 
\begingroup
\allowdisplaybreaks
\begin{align*}
w(M^*)&=\sum_{e\in M^*}w(e)\\
&=\sum_{i=0}^T\sum_{e\in M^*\cap(S_i-S_{i+1})}w(e)&\text{(2)}\\
&\le\sum_{i=0}^T\sum_{e\in M^*\cap(S_i-S_{i+1})}(1+\epsilon)^{i+1}&\text{(3)}\\
&\le\sum_{i=0}^T|M^*\cap(S_i-S_{i+1})|(1+\epsilon)^{i+1}&\text{(4)}\\
&\le\sum_{i=0}^T2\lambda\Delta_i(1+\epsilon)^{i+1}&\text{(5)}\\
&\le2\lambda(1+\epsilon)\sum_{i=0}^T\Delta_i(1+\epsilon)^i=2\lambda(1+\epsilon)A_0,&\text{(6)}\\
\end{align*}
\endgroup
where the identity in line (2) results from partitioning the edges by weight, so that $e\in M^*$ appears in $S_i-S_{i+1}$ if $(1+\epsilon)^i\le w(e)<(1+\epsilon)^{i+1}$. The inequality in line (3) results from each edge $e$ in $S_i-S_{i+1}$ having weight less than $(1+\epsilon)^{i+1}$, so an upper bound on the sum of the weights of edges in $M^*\cap(S_i-S_{i+1})$ is $(1+\epsilon)^{i+1}$ times the number of edges in $|M^*\cap(S_i-S_{i+1})|$, as shown in line (4). By Lemma \ref{lem:partition:inequality}, the partial sums of $2\lambda\Delta_i$ dominates the partial sums of $|M^*\cap(S_i-S_{i+1}|$, resulting in the inequality in line (5). The final identity in line (6) results from Observation \ref{obs:a:sum}.
Similarly,
\begingroup
\allowdisplaybreaks
\begin{align*}
w(M^*)&=\sum_{e\in M^*}w(e)\\
&=\sum_{i=0}^T\sum_{e\in M^*\cap(S_i-S_{i+1})}w(e)&\text{(2)}\\
&\ge\sum_{i=0}^T \sum_{e\in M^*\cap(S_i-S_{i+1})}(1+\epsilon)^i&\text{(3)}\\
&\ge\sum_{i=0}^T |M^*\cap(S_i-S_{i+1})|(1+\epsilon)^i&\text{(4)}\\
&\ge\sum_{i=0}^T\Delta_i(1+\epsilon)^i&\text{(5)}\\
&\ge\sum_{i=0}^T A_i=A_0,&\text{(6)}\\
\end{align*}
\endgroup
where the identity in line (2) again results from partitioning the edges by weight, so that $e\in M^*$ appears in $S_i-S_{i+1}$ if $(1+\epsilon)^i\le w(e)<(1+\epsilon)^{i+1}$. The inequality in line (3) results from each edge $e$ in $S_i-S_{i+1}$ having weight at least $(1+\epsilon)^i$, so a lower bound on the sum of the weights of edges in $M^*\cap(S_i-S_{i+1})$ is $(1+\epsilon)^i$ times the number of edges in $|M^*\cap(S_i-S_{i+1})|$, as shown in line (4). By Lemma \ref{lem:partition:inequality}, the partial sums of $|M^*\cap(S_i-S_{i+1})|$ dominates the partial sums of $\Delta_i$, resulting in the inequality in line (5). The final identity in line (6) results from Observation \ref{obs:a:sum}.
\vskip 0.2in\noindent
Thus, $\widehat{A}=A_0$ is a $2\lambda(1+\epsilon)$-approximation for $w(M^*)$.
\vskip 0.2in\noindent
Note that the assumption of Lemma \ref{lem:partition:inequality} holds with probability at least $1-\delta$ by Lemma \ref{lem:union:bound}. Since we require $\widehat{M_i}\le|U^*_i|\le\lambda\widehat{M_i}$ with probability at least $1-\frac{\delta}{T}$, then $S\left(n,\frac{\delta}{T}\right)$ space is required for each estimator. Since $T=\log W$ substreams are used and $W\le n^c$ for some constant $c$, then the overall space necessary is $S\left(n,\frac{\delta}{c\log n}\right)(c\log n)$. This completes the proof. 
\end{proofof}
\section*{Acknowledgements}
We would like to thank anonymous reviewers for their helpful comments regarding the presentation of the paper.



\newcommand{\Proc}{Proceedings of the~}
\newcommand{\STOC}{Annual ACM Symposium on Theory of Computing (STOC)}
\newcommand{\FOCS}{IEEE Symposium on Foundations of Computer Science (FOCS)}
\newcommand{\SODA}{Annual ACM-SIAM Symposium on Discrete Algorithms (SODA)}
\newcommand{\SOCG}{Annual Symposium on Computational Geometry (SoCG)}
\newcommand{\ICALP}{Annual International Colloquium on Automata, Languages and Programming (ICALP)}
\newcommand{\ESA}{Annual European Symposium on Algorithms (ESA)}
\newcommand{\CCC}{Annual IEEE Conference on Computational Complexity (CCC)}
\newcommand{\RANDOM}{International Workshop on Randomization and Approximation Techniques in Computer Science (RANDOM)}
\newcommand{\APPROX}{IInternational Workshop on Approximation Algorithms for Combinatorial Optimization Problems  (APPROX)}
\newcommand{\PODS}{ACM SIGMOD Symposium on Principles of Database Systems (PODS)}
\newcommand{\SSDBM}{ International Conference on Scientific and Statistical Database Management (SSDBM)}
\newcommand{\ALENEX}{Workshop on Algorithm Engineering and Experiments (ALENEX)}
\newcommand{\BEATCS}{Bulletin of the European Association for Theoretical Computer Science (BEATCS)}
\newcommand{\CCCG}{Canadian Conference on Computational Geometry (CCCG)}
\newcommand{\CIAC}{Italian Conference on Algorithms and Complexity (CIAC)}
\newcommand{\COCOON}{Annual International Computing Combinatorics Conference (COCOON)}
\newcommand{\COLT}{Annual Conference on Learning Theory (COLT)}
\newcommand{\COMPGEOM}{Annual ACM Symposium on Computational Geometry}
\newcommand{\DCGEOM}{Discrete \& Computational Geometry}
\newcommand{\DISC}{International Symposium on Distributed Computing (DISC)}
\newcommand{\ECCC}{Electronic Colloquium on Computational Complexity (ECCC)}
\newcommand{\FSTTCS}{Foundations of Software Technology and Theoretical Computer Science (FSTTCS)}
\newcommand{\ICCCN}{IEEE International Conference on Computer Communications and Networks (ICCCN)}
\newcommand{\ICDCS}{International Conference on Distributed Computing Systems (ICDCS)}
\newcommand{\VLDB}{ International Conference on Very Large Data Bases (VLDB)}
\newcommand{\IJCGA}{International Journal of Computational Geometry and Applications}
\newcommand{\INFOCOM}{IEEE INFOCOM}
\newcommand{\IPCO}{International Integer Programming and Combinatorial Optimization Conference (IPCO)}
\newcommand{\ISAAC}{International Symposium on Algorithms and Computation (ISAAC)}
\newcommand{\ISTCS}{Israel Symposium on Theory of Computing and Systems (ISTCS)}
\newcommand{\JACM}{Journal of the ACM}
\newcommand{\LNCS}{Lecture Notes in Computer Science}
\newcommand{\RSA}{Random Structures and Algorithms}
\newcommand{\SPAA}{Annual ACM Symposium on Parallel Algorithms and Architectures (SPAA)}
\newcommand{\STACS}{Annual Symposium on Theoretical Aspects of Computer Science (STACS)}
\newcommand{\SWAT}{Scandinavian Workshop on Algorithm Theory (SWAT)}
\newcommand{\TALG}{ACM Transactions on Algorithms}
\newcommand{\UAI}{Conference on Uncertainty in Artificial Intelligence (UAI)}
\newcommand{\WADS}{Workshop on Algorithms and Data Structures (WADS)}
\newcommand{\SICOMP}{SIAM Journal on Computing}
\newcommand{\JCSS}{Journal of Computer and System Sciences}
\newcommand{\JASIS}{Journal of the American society for information science}
\newcommand{\PMS}{ Philosophical Magazine Series}
\newcommand{\ML}{Machine Learning}
\newcommand{\DCG}{Discrete and Computational Geometry}
\newcommand{\TODS}{ACM Transactions on Database Systems (TODS)}
\newcommand{\PHREV}{Physical Review E}
\newcommand{\NATS}{National Academy of Sciences}
\newcommand{\MPHy}{Reviews of Modern Physics}
\newcommand{\NRG}{Nature Reviews : Genetics}
\newcommand{\BullAMS}{Bulletin (New Series) of the American Mathematical Society}
\newcommand{\AMSM}{The American Mathematical Monthly}
\newcommand{\JAM}{SIAM Journal on Applied Mathematics}
\newcommand{\JDM}{SIAM Journal of  Discrete Math}
\newcommand{\JASM}{Journal of the American Statistical Association}
\newcommand{\AMS}{Annals of Mathematical Statistics}
\newcommand{\JALG}{Journal of Algorithms}
\newcommand{\TIT}{IEEE Transactions on Information Theory}
\newcommand{\CM}{Contemporary Mathematics}
\newcommand{\JC}{Journal of Complexity}
\newcommand{\TSE}{IEEE Transactions on Software Engineering}
\newcommand{\TNDE}{IEEE Transactions on Knowledge and Data Engineering}
\newcommand{\JIC}{Journal Information and Computation}
\newcommand{\ToC}{Theory of Computing}
\newcommand{\Algorithmica}{Algorithmica}
\newcommand{\MST}{Mathematical Systems Theory}
\newcommand{\Com}{Combinatorica}
\newcommand{\NC}{Neural Computation}
\newcommand{\TAP}{The Annals of Probability}
\newcommand{\TCS}{Theoretical Computer Science}

\bibliographystyle{plain}
\bibliography{gem}

\end{document}